\documentclass[times, 10pt]{article} 
\usepackage{latex8}
\usepackage{times}
\usepackage{epsf}
\usepackage{latexsym}
\usepackage{graphicx}
\usepackage{subfigure}
\usepackage{xcolor}

\makeatletter

\newcommand{\comment}[1]{{}}

\newcommand{\NatNumbers}{\mathrm{I\!N}}

\newcommand{\ac}{\mbox{\rm AC$^0$}}

\newcommand{\shac}{\mbox{\rm \#AC$^0$}}

\newcommand{\tc}{\mbox{\rm TC$^0$}}

\newcommand{\acc}{\mbox{ACC$^0$}}

\newcommand{\nc}{\mbox{\rm NC$^1$}}

\newcommand{\halmos}{\rule{1ex}{1.4ex}}
\newenvironment{proof}{\noindent {\bf Proof}:}{\hspace*{\fill}$\halmos$\medskip}

\newtheorem{theorem}{Theorem}

\newtheorem{lemma}[theorem]{Lemma}

\newtheorem{proposition}[theorem]{Proposition}
\newtheorem{fact}[theorem]{Fact}

\newtheorem{definition}{Definition}

\usepackage{latex8}
\makeatother
\begin{document}

\title{Topology inside \nc}

\author{Eric Allender\thanks{Supported in part by NSF grants
CCF-0832787 and CCF-1064785.}\\
Rutgers University\\
New Brunswick, NJ 08855, USA\\
allender@cs.rutgers.edu
\and Samir Datta\thanks{This research was done while this author was
a postdoctoral associate at WINLAB, Rutgers University.}\\
Chennai Mathematical Institute\\ 
Chennai, TN 603 103, India\\
sdatta@cmi.ac.in
\and
Arsenii Karnaukhov\\
Independent Researcher\\
senixka@gmail.com
\and
Grisha Pochuev\\
Independent Researcher\\
n{\_}854@mail.ru
\and
Sambuddha Roy
\thanks{Work done while this author was a graduate student at Rutgers University, NJ, USA}
\\
Qualtrics\\
Seattle, WA  98101, USA\\
shombuddho@gmail.com
\and
Alexander Shekhovstov\\
Columbia University\\
New York, NY  10027, USA\\
alex.v.shekhovtsov@gmail.com
}

\maketitle

\begin{abstract}
We show that $\acc$ is precisely what can be computed
with constant-width circuits of polynomial size
and polylogarithmic genus.  This extends a characterization
given by Hansen \cite{Hansen2006}, showing that 
planar constant-width circuits
also characterize $\acc$.  Thus polylogarithmic genus
provides no additional computational power in this model.
We consider other generalizations of planarity, including
crossing number and thickness.  We show that constant-width
circuits of polynomial size and thickness
two already suffice to capture all of $\nc$.
\end{abstract}

\Section{Introduction}

Circuit complexity provides a vocabulary for classifying and understanding
the complexity of Boolean functions.  The complexity class $\nc$ plays
a central role in the study of circuit complexity; $\nc$ can be defined either
as the class of Boolean functions that can be represented by Boolean formulae
of size polynomial in $n$ (where $n$ is the number of input variables), or as
the class of Boolean functions that can be computed by circuits of depth 
$O(\log n)$, consisting of gates of fan-in $O(1)$.  Our focus in this paper
is on $\nc$ and its subclasses.

The complexity class $\acc$ is one of the most important
subclasses of $\nc$.  Barrington's characterization of
$\nc$ in terms of constant-width branching programs \cite{BarringtonNC1.89}
highlighted the importance of algebraic considerations in
studying small circuit complexity classes, and initiated
a productive line of research reinforcing the connections between
circuit complexity and formal language theory \cite{Barr.Therien.88, bcst, mpt}.
In this framework, computation over {\em non-solvable} monoids
gives complete problems for $\nc$, while computation over
{\em solvable} monoids yields problems in $\acc$.

The class $\acc$ also attracts attention, because it lies
at the frontier of current lower bound techniques.  $\acc$
is the union of the classes $\ac[m]$ of problems computed by
constant-depth polynomial-size circuits of AND, OR, and MOD$_m$
gates.  If $m$ is prime, then $\ac[m]$ is known to be a proper
subclass of $\acc$ \cite{Smolensky.87, Razborov.87}, but for $m$
composite, it remains unknown if EXP is contained in
(nonuniform\footnote{Briefly, a circuit family $\{C_n : n \in \NatNumbers\}$ is
said to be {\em uniform} if there is an efficient algorithm to construct
$C_n$.  Our focus in this paper is on nonuniform circuit families.  For
more background, consult the text by Vollmer \cite{vollmer}.})
$\ac[m]$.  Indeed, it was viewed as a major advance when it was
shown that NEXP is not contained 
in (nonuniform) $\acc$ \cite{williams}; subsequently, Murray and Williams
improved this, to show that nondeterministic quasipolynomial time is not
in $\acc$ \cite{murray.williams}.  Other lower bounds for $\acc$ circuits building on
\cite{williams,murray.williams} can be found in
\cite{chen3,chen1,chen2,papakonstantinou,wang}.

In 2006, Hansen \cite{Hansen2006} 
gave a very surprising characterization of
$\acc$, in terms of constant-width circuits.  Barrington's theorem
\cite{BarringtonNC1.89}
yields as a corollary a characterization of
$\nc$ as precisely the problems solvable by constant-width circuits
of polynomial size.  If NOT gates are allowed, then these circuits can
be made to be planar, but if NOT gates are allowed only at the
leaves (i.e., at the inputs), then Hansen is able to build on
earlier work \cite{Hansen-circuits.02}
to show that $\acc$ is 
precisely the class of languages accepted by polynomial-size
constant-width {\em planar} circuits.  This is a beautiful and
unexpected characterization, making no blatant
reference to counting mod $m$ or to the algebraic considerations 
that have been central to all previous work on $\acc$.  Subsequently, it
was shown that even the more restrictive model of planar nondeterministic
branching programs suffices to recognize all languages in $\acc$ 
\cite{hansen.koucky}.

$\tc$ is an important complexity class lying between $\acc$ and $\nc$.
Since the papers of Hansen and Barrington combine to give characterizations
of $\acc$ and $\nc$ in terms of constant-width circuits, it was natural
to pick up the question of whether $\tc$ also corresponds to a class
of constant-width circuits.  Since we wanted to characterize a complexity
class that is intermediate between $\acc$ and $\nc$, we focused on
graphs that are ``intermediate'' in some sense between planar and
unrestricted.  In this paper, we consider some of the 
most important graph-theoretic notions that generalize the concept
of planarity:
\begin{itemize}
\item Crossing Number
\item Genus
\item Thickness
\end{itemize}
It will suffice for the reader to have an informal grasp of these
notions; we provide a few additional definitions later on where
they are needed.  The {\em crossing number} of a graph is the least
number of ``edge intersections'' required in any embedding of the
graph in the plane.  The {\em genus} of a graph is the least
number of ``handles'' (or ``doughnut halves'') that need to be
attached to the plane in order to provide a surface on which the
graph can be embedded with no edge crossings.  The {\em thickness}
of a graph is the smallest number of blocks needed in a partition
of the edge set, so that the vertices can be embedded in a plane so
that none of the edges in any block of the partition cross each other;
intuitively this is the number of transparencies that would be necessary in
order to represent
the graph, where each transparency is planar.
(For more complete definitions, please consult a graph theory text, such 
as \cite{Gibbons.85}.)
Planar graphs have crossing number 0, genus 0, and thickness 1.
For any graph $G$, 
thickness$(G) - 1 \leq$ genus$(G) \leq$ crossing.number$(G)$. 

Our main theorem is that constant-width polynomial size
circuits of polylogarithmic genus compute exactly the problems
in $\acc$.  As a corollary, the same is true for circuits
with polylogarithmic crossing number.  In contrast, constant-width
circuits of thickness two already suffice to compute all problems
in $\nc$.  We can view this as a positive result, because it yields
additional information about $\acc$ and $\nc$.  However, it is also
in some sense a negative result, in that it removes the most prominent
candidates for a possible characterization of $\tc$ in terms of
constant-width circuits.  We leave the task of finding such a
characterization as our main open problem.

If $C$ is a circuit whose graph has genus $k$, we will say that $C$ has
genus $k$.

A preliminary version of this work appeared as \cite{ccc,eccc}.  However, the
proof of the main theorem presented in the preliminary version is
incorrect, as discussed in \cite{parting.shots}, where the task of
finding a correct proof was listed as an open question.  The fourth author, and (independently) the third and sixth authors of the current article found a quite simple proof, with some assistance from ChatGPT. (See, e.g., \cite{pochuev}.)  In modified form, that is the proof that is presented in Section~\ref{mainsec}.

\Section{Definitions and Preliminaries}\label{defsec}

We first define a layered digraph : 
\begin{definition}
We call a digraph {\em layered} if there is a partition of the 
vertex set into sets $V_0, V_1, \cdots , V_l$ (and we call them
{\em layers} or {\em levels}) where every (directed) edge in the graph is from 
some layer $V_i$ to $V_{i+1}$. 
\end{definition}

\begin{definition}
The {\em width} of a layered digraph with layers $V_0, \ldots, V_r$
is max$\{|V_i| : 0 \leq i \leq r\}$.
\end{definition}

A circuit of width $w$ is a layered digraph of width $w$ where each vertex
is labeled either as an AND gate, an OR gate, an input
variable $x_i$, or a negated input variable $\neg x_i$.  It
is important to note that inputs can appear on any level,
and inputs can appear more than once.

Hansen's characterization of $\acc$ in terms of planar constant-width circuits
is the starting point for our investigation:

\begin{theorem}\label{hansenthm}\cite{Hansen2006}
  $\acc$ is equal to the set of languages accepted by constant-width planar circuits of polynomial size.
\end{theorem}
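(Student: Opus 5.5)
Hansen's theorem has two directions, and I would handle them separately.

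\emph{Planar constant-width circuits compute only $\acc$.} I would cut a width-$w$ planar layered circuit into its consecutive layers. The layers $V_i$ and $V_{i+1}$, with the edges between them, form a planar bipartite piece of constant size. The behaviour of the circuit is then a composition of maps from $\{0,1\}^{w}$ to $\{0,1\}^{w}$: each map sends the values on layer $V_i$ to the values on layer $V_{i+1}$, and it depends on the literals sitting on layer $V_{i+1}$. Evaluating the circuit is therefore a word problem over the finite monoid generated by these transformations. By Barrington--Th\'erien, that word problem is in $\acc$ provided the monoid is solvable, i.e.\ contains no non-solvable group.

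The key step, and the main obstacle, is to show that planarity together with monotonicity (negations only at inputs) rules out a non-solvable group. I would first try to show that the maps realised by planar monotone layers preserve the cyclic or left-to-right order of the embedding. Any group inside the monoid would then act by order-respecting, essentially ``interval'' transformations, and such groups are abelian or at least solvable. This is where Hansen's earlier work on planar circuits and the structure of monotone planar transformations is needed. A subtlety is that edges may wrap around the embedding, so the order involved is cyclic rather than linear, and any group in the monoid should embed into a dihedral-like solvable group.

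\emph{$\acc$ is computable by planar constant-width circuits.} I would simulate an $\acc$ circuit through the Barrington--Th\'erien characterization as programs over a solvable monoid. It suffices to realise AND, OR and MOD$_m$ of previously computed bits as planar constant-width gadgets, using negated literals at the leaves and dual-rail encoding so that no internal NOT gates are needed. A MOD$_m$ counter can be planarised as a cyclic shift of $m$ wires. Feeding in the bits one at a time keeps the width constant, and the wrap-around edge is routed around the outside of the drawing. Constant-depth composition then gives polynomial size, and the planar pieces are concatenated layer by layer.
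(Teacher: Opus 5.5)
Your upper-bound direction has a genuine gap: the step you call key is false as stated. Planarity is a property of the whole graph, and a two-layer slice carries no planarity information. A slice that only permutes wires (single-input AND gates) is a matching, and a circuit built from such slices is a disjoint union of paths, so it is planar and monotone. A planar width-$5$ circuit can therefore contain one slice realizing a $5$-cycle of the wires and another realizing a transposition. The monoid generated by its slice maps then contains $S_5$ acting on $\{0,1\}^5$ by permuting coordinates, which is not solvable, so Barrington--Th\'erien gives nothing. For the same reason, the layers of a planar embedding need not respect any cyclic or left-to-right order; edges may loop around the initial and final layers, as remarked in Section~\ref{ncosec}. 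What confines planar circuits to $\acc$ is a global restriction on which \emph{words} occur, not a property of the generated monoid.

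The missing idea is the one the paper invokes. Hansen \cite{Hansen2006} shows, via a characterization of acyclic digraphs that embed on a cylinder with all arcs flowing in one direction along the axis, that simulating planar circuits of width $w$ reduces to simulating \emph{cylindrical} circuits of width at most $w$. Only in that setting does an order-preservation argument apply. The result of \cite{Hansen-circuits.02}, that constant-width cylindrical circuits lie in $\acc$ with a modulus depending only on the width, then finishes the inclusion. Even there, the groups in the transformation monoid act on sets of assignments in $\{0,1\}^w$, not on wires, so ``order-respecting, hence solvable'' needs a real argument.

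Your lower-bound sketch also skips its only delicate point: concatenating planar pieces along shared wires does not preserve planarity in general. For example, the constant-width circuits for $\nc$ obtained from Barrington's permutation branching programs have forests as their two-layer slices. Moreover, in a controlled cyclic shift, each computed control bit $b$ and its dual-rail partner $\neg b$ must reach all $m$ counter positions at every step. Routing the wrap-around edge outside says nothing about how a single copy of the subcircuit for $b$ can do this planarly once the counter has many steps.

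A fix is to unfold the depth-$d$ circuit into a polynomial-size formula. Then give each use of a sub-result its own copy of the subcircuit, or of its De~Morgan dual for $\neg b$. Each copy meets the rest of the graph in a single cut vertex, so every block is planar. The width grows by a factor $O(m)$ per level of depth and so stays constant, and the size stays polynomial.
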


\begin{definition}
  If $C$ is a layered circuit with layers $V_0, \ldots V_r$, then $C[a, b]$ denotes the subcircuit of $C$ spanned by layers $V_a \ldots, V_b.$
  \end{definition}

A circuit is planar if it can be embedded in the plane with no
two edges crossing.  More generally, a circuit has genus $\leq k$
if it can embedded on a surface of genus $k$ with no edges crossing.
The reader need have no detailed understanding of the topological
notion of genus; an informal grasp of the topic is sufficient.
Informally, a graph has genus $k$ if it can be embedded with no
edge crossings on a plane with $k$ ``handles'', where a ``handle''
is a bent cylinder that is attached to the plane at each end.

\begin{definition}
  The genus of $G$, denoted $\gamma(G)$ is the least number $k$
  such that $G$ can be embedded in a surface of genus $k$.  Thus a
  planar graph has genus 0.
  \end{definition}

The preliminary version of this work \cite{ccc,eccc} contained a
detailed discussion of certain properties that an
embedding of a circuit onto a surface of genus $k$ could be
assumed to have, summarized as \cite[Theorem 1]{ccc}.  Although
it is possible that these properties could be useful for
future work, they are not needed for the simplified proof of
our main theorem, and thus this material has not been included
in this revision.

Instead the main fact that we need about genus is summarized in the
following theorem from \cite[Corollary 2]{BattleHararyKodamaYoungs1962}:
\begin{theorem}\label{thm:battle}
Suppose the graph $G$ consists of connected components $G_1, \ldots, G_k$. Then, 
$$
\gamma(G) = \gamma(G_1) + \ldots + \gamma(G_k).
$$
\end{theorem}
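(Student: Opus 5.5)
The statement to prove is Theorem~\ref{thm:battle}: genus is additive over connected components. The inequality $\gamma(G) \leq \sum_i \gamma(G_i)$ is easy, and the reverse inequality carries the real content. The main tool for the reverse direction is Euler's formula for cellular embeddings.

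For the upper bound, proceed by induction on $k$. It suffices to embed the disjoint union of two graphs $H_1, H_2$ on a surface of genus $\gamma(H_1)+\gamma(H_2)$.
\begin{itemize}
\item Take minimal-genus embeddings of $H_1$ on $S_1$ and $H_2$ on $S_2$.
\item In each surface choose a small open disk lying inside some face, so it avoids the graph.
\item Remove the two disks and glue the boundary circles together, forming the connected sum $S_1 \# S_2$.
\end{itemize}
The connected sum has genus $\gamma(H_1)+\gamma(H_2)$ and contains both embeddings without crossings.

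For the lower bound, fix an embedding of $G$ on an orientable surface $S$ of minimal genus $g$. First reduce to the case where every face is a $2$-cell: a minimum-genus embedding of a connected graph is cellular, a fact due to Youngs that I would quote. The difficulty is that $G$ is disconnected, so faces of its embedding can never be disks. Instead, argue by cutting. If some face $F$ is not simply connected, it contains a simple closed curve $c$ not contractible in $F$.
\begin{itemize}
\item If $c$ is non-separating in $S$, cut along $c$ and cap the two boundary circles with disks. This lowers the genus by one while keeping the embedding, which contradicts minimality.
\item Hence every such curve separates $S$. If $c$ is separating and both sides contain vertices of $G$, cutting and capping yields two surfaces whose genera sum to $g$. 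Each carries a nonempty union of components of $G$, so induction on the number of components gives $g \geq \sum \gamma(G_i)$.
\item If one side of $c$ contains no part of $G$, that side is either a disk, which is impossible since $c$ is non-contractible in $F$, or has positive genus. In the latter case, discarding it and capping with a disk lowers the genus, again contradicting minimality.
\end{itemize}
If no face contains such a curve, every face is simply connected. For a connected surface this forces $G$ to be connected, and then the base case $k=1$ is trivial.

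The main obstacle I expect is the topological bookkeeping in the cutting step. One must show that a non-simply-connected face contains a simple closed curve that is essential in $F$, and that cutting and capping behaves as claimed on genus. Here I would rely on the classification of surfaces: surgery along a separating curve splits the genus additively, and surgery along a non-separating curve reduces the genus by exactly one. Since the paper simply cites \cite{BattleHararyKodamaYoungs1962}, I would keep this at the level of standard surface topology rather than a fully rigorous argument.
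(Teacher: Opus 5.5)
Your argument is correct, but it does not follow the paper, because the paper gives no proof. It quotes the statement as Corollary~2 of \cite{BattleHararyKodamaYoungs1962}, a consequence of their main theorem that genus is additive over \emph{blocks}. That stronger result requires analysing minimal embeddings at a cut vertex, since two blocks sharing a vertex cannot be separated by a curve that avoids the graph. For components you avoid this entirely: distinct components can be separated by curves lying in faces, and your surgery trichotomy does the work. Your cases~1 and~3 are essentially Youngs' own proof that minimal embeddings of connected graphs are cellular, so you never need his theorem as a black box. Case~2 is exactly where disconnectedness enters and feeds the induction. The price is reliance on standard surface topology that you state but do not prove:
\begin{itemize}
\item the classification of compact bordered surfaces;
\item the Euler-characteristic count for cutting and capping;
\item two facts about faces: a non-simply-connected face contains a simple closed curve essential in it, and if every face is simply connected then the faces are open disks and the graph is connected.
\end{itemize}
The citation buys the stronger block statement, which the paper does not need. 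Lemma~\ref{lem:select} only uses $\gamma(G)\ge\sum_i\gamma(G_i)$, which is precisely the direction your surgery argument establishes.

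A few small points. Your opening says you will use Euler's formula and Youngs' theorem, but neither is used, so those sentences can be dropped. Also, ``faces can never be disks'' should read ``not all faces can be disks''. In case~3, say explicitly that a $G$-free side of $c$ lies inside $F$: it is connected, misses $G$, and contains $c\subseteq F$. That is what makes a disk there contradict the choice of $c$.

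You can also bypass both facts about faces. Let $c$ range over the boundary curves of a regular neighbourhood $N$ of $G_1$ chosen disjoint from the other components. Suppose no such curve is non-separating, none separates two parts of $G$, and none cuts off a $G$-free side of positive genus. Then every component of $S\setminus\mathrm{int}(N)$ is a disk missing $G$, so $G\subseteq N$, contradicting $k\ge 2$.
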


\Section{Small Genus Characterizes $\acc$}\label{mainsec}

\begin{theorem}\label{mainthm}
Let $A$ be a language.  $A$ is in $\acc$ if and only if
$A$ is accepted by a family of constant-width circuits
of polynomial size and polylogarithmic genus.
\end{theorem}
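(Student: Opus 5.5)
The direction from left to right is immediate from Theorem~\ref{hansenthm}. Every language in $\acc$ is accepted by constant-width planar circuits of polynomial size, and planar graphs have genus $0$, which is polylogarithmic. So all the work is in the converse. Let $C$ be a circuit of width $w=O(1)$, polynomial size, layers $V_0,\ldots,V_r$, and genus $g\le \log^k n$. The plan has three steps: cut $C$ into few planar pieces, simulate each piece in $\acc$ via Hansen, then recombine the pieces in constant depth.

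\textbf{Step 1: decomposition.} I would proceed greedily. Let $a_1=0$ and let $b_1$ be the least index such that $C[a_1,b_1]$ is nonplanar. Then set $a_2=b_1+1$, let $b_2$ be the least index such that $C[a_2,b_2]$ is nonplanar, and so on, until the remaining final slice is planar. Suppose we obtain $m$ nonplanar blocks. These blocks are vertex-disjoint subgraphs of $C$, so their disjoint union is a subgraph of $C$. Genus is monotone under taking subgraphs, and by Theorem~\ref{thm:battle} the genus of the union is the sum of the genera of its components. Each block has genus at least $1$, so $m\le g$.

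By minimality, each $C[a_j,b_j-1]$ is planar. Hence $C$ is an alternation of at most $g+1$ planar slices and at most $g$ single ``edge layers'', namely the edges from $V_{b_j-1}$ to $V_{b_j}$ together with the gates of $V_{b_j}$.

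\textbf{Step 2: each piece is a map on $\{0,1\}^w$ computable in $\acc$.} Each slice $C[a,b]$ computes the values on $V_b$ from the values on $V_a$ and from the input literals placed inside the slice. For each of the $2^w$ possible value vectors $\sigma$ on $V_a$, I would replace the gates of $V_a$ by the corresponding constants. Each constant is realized as a small pendant gadget: constant $1$ as $x_1\vee\neg x_1$, and constant $0$ as $x_1\wedge\neg x_1$, attached one layer earlier. Attaching pendant gadgets preserves planarity and increases the width by only $O(1)$. Theorem~\ref{hansenthm} then says that every output bit of the slice under boundary $\sigma$ is an $\acc$ function of $x$.

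Consequently the whole transition map $f_j:\{0,1\}^w\to\{0,1\}^w$ of slice $j$ is given by $w2^w=O(1)$ bits, each computable in $\acc$. Each single edge layer is a fixed map whose leaf literals are read directly from $x$, so it is trivially in $\acc$.

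\textbf{Step 3: composition.} Acceptance is now the value of a composition $f_{2g+1}\circ\cdots\circ f_1$ of at most $2g+1=\log^{O(1)}n$ elements of the finite monoid $M$ of self-maps of $\{0,1\}^w$, applied to the boundary values. I expect this step to be the main obstacle, because naive sequential composition costs depth linear in $g$.

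I would handle it with a tree of fan-in $\log n$. Any product of $\log n$ elements of $M$ is a function of $O(\log n)$ bits, so it has a DNF of polynomial size. A tree of fan-in $\log n$ and depth $k+1$ multiplies $\log^{k+1}n$ elements using only $O(k)$ levels of AND/OR, which is $\ac$. Stacking this on top of the $\acc$ circuits computing the truth tables of the $f_j$ gives a polynomial-size constant-depth $\acc$ circuit, as required.

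The care points are in Step 2: checking that the constant gadgets genuinely preserve planarity and constant width, and that Hansen's theorem applies uniformly to all of the polynomially many slices with a common constant width (it does, since the width is at most $w+O(1)$ throughout).
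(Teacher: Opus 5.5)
Your proposal is correct and is essentially the paper's proof: a greedy cut into planar slices whose number is bounded by additivity of genus over connected components (your vertex-disjoint nonplanar blocks even give $m\le g$, versus the paper's every-third-interval bound $q\le 3g+10$), Hansen's theorem applied to each planar slice, and an $\ac$ recombination by polynomial-size DNFs over $\Theta(\log n)$ consecutive pieces applied recursively, which is the paper's path-finding argument phrased as composition in the monoid of self-maps of $\{0,1\}^w$. The only slip is bookkeeping: because $a_{j+1}=b_j+1$, the gap between consecutive planar slices is $C[b_j-1,b_j+1]$, i.e.\ two edge layers rather than one, and this piece is equally trivially in $\ac$, so nothing else changes.
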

\begin{proof}
One direction follows immediately from Hansen's characterization
\cite{Hansen2006} where the genus is even required to be zero.

For the other direction, we first require the following technical
lemma:

\begin{lemma}
\label{lem:select}
Let $C$ be a layered circuit of genus $g$, with layers $V_0, \ldots, V_r$.
Then, there is a number $q \leq 3g + 10$ and indices $0=t_1 < t_2 < \ldots < t_q = r$, such that for any $i=1\ldots q-1$,
    \begin{itemize}
        \item either $C[t_i, t_{i+1}]$ is planar, or
        \item $t_{i+1} = t_i + 1$.
    \end{itemize}

\end{lemma}
\begin{proof}
    We construct the sequence $t_i$ greedily. For $i=2, 3, \ldots$ define $t_i > t_{i-1}$ to be the largest number such that $C[t_{i-1}, t_i]$ is planar. If even $C[t_{i-1}, t_{i-1}+1]$ is not a planar graph, then set $t_i =  t_{i-1}+1$. We proceed until step $i = q$ when we reach $t_q = r$.

    We claim that $q \leq 3g+10$.  To see this assume, $q > 3g + 10$, and we will derive a contradiction. By definition, $C[t_i, t_{i+1} + 1]$ is not planar for any $i < q - 1$.
    For $3i + 1 < q$, the graphs $C[t_{3i}, t_{3i + 1} + 1]$ are disjoint and are not planar. Therefore, by Theorem~\ref{thm:battle} we have genus of $C$, $\gamma(C) \geq \sum_i \gamma(C[t_{3i}, t_{3i + 1} + 1]) > g$ (since removing edges does not increase genus, and since each $C[t_{3i}, t_{3i + 1} + 1]$ has genus at least $1$).  This is the desired contradiction.
\end{proof}

We now continue with the proof of the main theorem.  Let $\{C_n: n \in \NatNumbers\}$ be a family of circuits of polynomial size with width $w$ and genus $g=O(\log^c n)$.  Consider a
given circuit $C_n$.
Let $V_0, \ldots, V_r$ be the layers of $C_n$. Let $0=t_1 < t_2 < \ldots < t_q = r$ be given by Lemma~\ref{lem:select}. For any $i < q$, $C[t_i, t_{i+1}]$ can be computed by $\acc$ circuits.
This is because either $C[t_i, t_{i+1}]$ is planar (in which case this claim follows from Theorem~\ref{hansenthm})
or $t_{i+1} = t_i + 1$ holds and $C[t_i, t_{i+1}]$ can even be computed by $\ac$.  For every $i$ such that $C[t_i, t_{i+1}]$ is planar, we use Hansen's theorem to convert it into a constant-depth AND, OR, $\textsc{MOD}_m$ circuit $C_i'$ (not necessarily planar).
It is important to note that the modulus $m$ depends only on the width of the circuits in the circuit family, and thus the same modulus $m$ can be used for all of the subcircuits of $C_n$, for every $n$.  This is implicit in \cite{Hansen2006}, but it is appropriate to say a bit more about this point here.  In \cite{Hansen2006}, Hansen shows that simulating planar circuits of width $w$ can be reduced to simulating {\em cylindrical} circuits of width no more than $w$, and then appealing to a theorem of \cite{Hansen-circuits.02}, in which it is shown that cylindrical circuits of width $w'$ compute only functions in $\acc$ (where the proof of this inclusion uses modular gates of modulus $m$, where the modulus depends only on the width $w'$).  Since there are $O(1)$ different choices of $w' \leq w$, this possibly gives rise to $O(1)$ different moduli $m_1,\ldots m_c$, but (as observed in \cite{Smolensky.87}), computation mod $m_i$ can be simulated with a MOD$_m$ gate, for $m=\prod_{i=1}^c m_i$.  

Thus there is a function $f : \{0, 1\}^n \times \{0, 1\}^w \times [q - 1] \to \{0, 1\}^w$ computable in $\acc$, 
where $f$ takes as input a triple $(x,v,i)$ and outputs a string $z$
such that $v$ and $z$ are bit strings of length $w$, having the
property that if the $w$ gates at the start of segment $i$ have the
values given by the vector $v$ and the string $x$ is used to provide
values to the input gates appearing in segment $i$, then the gates at the
output level of segment $i$ will take on the values given by the vector $z$.  This is because, using
$\ac$ circuitry, we can use the string $i$ to select the $\acc$ circuit for segment $i$,
and then use $x$ and $v$ to compute the desired output $z$.

Thus, given $x$, in $\acc$ we can build a layered
graph with width $2^w=O(1)$ and polylogarithmically many
levels, such that there is an edge from node $v$ in level $j$ to node $z$ in 
level $j+1$
if and only if $f(x,v,j) = z$. More precisely, when we say that we can ``build'' such a graph, we mean that there is a set of predetermined vertices and for each pair of vertices $(u, v)$ we compute the value of a Boolean variable $b_{(v, u)}$ that represents existence of the edge $u \to v$.  Furthermore, input $x$ is accepted if and only if there is a path in this graph from the string $v$ in the first level encoding the initial gate values in the circuit, to a string $z$ in the final level where the output gate has value 1. Finding paths in such graphs can be done
in $\ac$.  


Let us describe how to check if there exists a path between two vertices in such a graph $G$ using $\ac$ circuitry. If two vertices $v$ and $u$ are $l$ layers apart, then there are at most $2^{wl}$ possible paths between them $v=v_1, \ldots, v_l=u$. We can build a disjunctive normal form formula $\lor_{v = v_1, \ldots, v_l=u} (b_{(v_1, v_2)} \wedge b_{(v_2, v_3)} \wedge \ldots \wedge b_{(v_{l-1}, v_l)})$ that checks if there exists a path between $v$ and $u$. Set $l = \Theta(\log n)$, then this formula has depth $2$ and polynomial size. Denote by $V_0, V_1, \ldots, V_q$ the layers of $G$, and wlog assume that $q$ is divisible by $l$. For every $i$ and $u \in V_{il}, v \in V_{(i+1)l}$ we compute if there is a path between them. Consider a modified graph $G'$ obtained from $G$ by leaving only layers $V_0, V_l, V_{2l}, \ldots, V_{q}$ and connecting two vertices in adjacent layers if and only if there is a path between them in $G$. Note that the number of layers in $G'$ is $q/l + 1$. We have reduced our problem to checking if there is a path between two vertices in $G'$. let us recursively check the existence of this path. Since the initial number of layers is $O(\log^c n)$ and $l = \Theta(\log n)$, the recursion depth is $O(c)$. On each level of recursion the number of pairs of vertices is polynomial and the DNF for computing existence of a path between is polynomial as well. Therefore, the final circuit has polynomial size and constant depth.

\end{proof}

\Section{A new characterization of \nc}\label{ncosec}

Genus is just one of several possible generalizations of
planarity.  In this section we consider {\em thickness}, and
we show that all problems in $\nc$ can be solved by constant-width
polynomial-size circuits of thickness two.  We actually prove
a stronger result showing that a very limited type of circuit
with thickness two suffices for this task.  First recall Barrington's characterization of $\nc$:
\begin{theorem}[\cite{BarringtonNC1.89}]
\label{thm:bnc1}
    $\nc$ is the class of languages accepted by constant-width polynomial-size circuits with AND, OR, NOT gates.
\end{theorem}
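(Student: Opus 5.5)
The plan is to prove the two inclusions separately: the simulation of constant-width circuits inside $\nc$ is a divide-and-conquer composition argument, and the converse goes through Barrington's permutation branching programs.

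\emph{Constant-width circuits are in $\nc$.} Let $C_n$ have width $w$, layers $V_0,\ldots,V_r$ and polynomial size, so $r$ is polynomial. For each $j$, the gate values on $V_{j+1}$ depend only on the values on $V_j$ and on the input literals placed in $V_{j+1}$. Hence layer $j$ defines a map $F_j^x:\{0,1\}^w\to\{0,1\}^w$. For fixed $x$, this map is a table with $2^w\cdot w=O(1)$ bits, and each bit is computed from $x$ by a constant-size formula. The output of the circuit is read off from $F_{r-1}^x\circ\cdots\circ F_0^x$ applied to the values of the gates on $V_0$, which are themselves literals.

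I would compute this composition with a balanced binary tree of depth $\lceil\log r\rceil=O(\log n)$. At each internal node, composing two constant-size tables is a fixed finite Boolean function of $O(1)$ bits, so it costs constant depth and fan-in $2$. The total depth is therefore $O(\log n)$ and the size is polynomial, which puts the language in $\nc$.

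\emph{$\nc$ has constant-width circuits.} I would start from a log-depth, fan-in-$2$ formula for the language, with negations pushed to the leaves. The first step is Barrington's construction of width-$5$ permutation branching programs over $S_5$. Fix a $5$-cycle $\sigma$. By induction on depth, I would build a program that yields $\sigma$ when the subformula is true and the identity otherwise, as follows.
\begin{itemize}
\item A leaf is a single instruction.
\item Negation is obtained by conjugating and multiplying by $\sigma^{-1}$.
\item AND is handled by a commutator $\alpha\beta\alpha^{-1}\beta^{-1}$ of programs for two $5$-cycles $\alpha,\beta$ whose commutator is again a $5$-cycle, which is then conjugated back to $\sigma$.
\end{itemize}
Each level of the formula multiplies the length by at most $4$, so depth $O(\log n)$ gives polynomial length. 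I expect this step, in particular checking the existence of the required $5$-cycles in $S_5$ with a $5$-cycle commutator, to be the main obstacle. The rest is bookkeeping.

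\emph{Converting the program to a layered circuit.} I encode the current state one-hot in $5$ gates per stage. An instruction $(x_i,\pi_0,\pi_1)$ becomes
$$s'_k=(x_i\wedge s_{\pi_1^{-1}(k)})\vee(\neg x_i\wedge s_{\pi_0^{-1}(k)}).$$
This uses two extra layers. The AND layer holds $10$ gates, fed by input literals $x_i,\neg x_i$ that sit on the preceding layer. The OR layer holds $5$ gates. To keep every edge between consecutive layers, I copy signals forward with single-input AND gates. The width stays a constant (about $12$), and the size is polynomial.

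The circuit accepts iff the final state differs from the initial one, i.e.\ iff the program yields $\sigma$ rather than the identity. Taking the initial one-hot state to be $e_1$, so that the final state is $e_{\sigma(1)}$ on accepted inputs and $e_1$ otherwise, the output is simply the gate $s_{\sigma(1)}$.
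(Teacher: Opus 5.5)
Your proof is correct and is essentially Barrington's original argument. That is all the paper relies on here: it cites the theorem without proof, and in Lemma~\ref{ncstacks} it uses the same one-hot encoding of an $S_5$ permutation branching program, with each step realized by gates of the form $(\bar{x}\wedge b)\vee(x\wedge b')$, as your conversion step.

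The only slip is that, having pushed negations to the leaves, you never treat OR gates. Your negation step together with De Morgan ($f\vee g=\neg(\neg f\wedge\neg g)$) covers them at no extra length. The commutator fact you flag as the main obstacle is just a finite check, e.g.\ $(12345)(13542)(54321)(24531)=(13254)$, composing left to right.
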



Consider Figure \ref{threepages}, showing three half-planes joined
at a common intersecting line called the {\em spine}.  This is the
type of surface on which we will embed our constant-width circuits,
with the restriction that no wire goes through more than one page and the subgraph on any one half-plane is
{\em upward planar}. (A layered circuit is {\em upward planar} if, when the layers are arranged in columns with layer 0 on the left and the output layer on the right, all edges from any layer $i$ to $i+1$ can be embedded as a straight line segment, with no edge crossings.  The more general planar circuits that characterize $\acc$ allow edges embedded as curves that loop around the initial and final layers.  It is known that constant-width upward planar circuits of polynomial size characterize $\ac$ \cite{Barringtonmonotone.99}.) 
It is easy to see that if only two pages are used, then the entire
graph is upward planar, and by \cite{Barringtonmonotone.99} such circuits
can compute only languages in $\ac$. Let us show that any graph that can be embedded on three pages in this way has thickness two.

 \begin{lemma}
    Any graph embedded in three pages, such that every edge is within only one page, has thickness two.
\end{lemma}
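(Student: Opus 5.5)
The plan is to partition the edge set into two blocks by grouping pages: let $P_1, P_2, P_3$ be the three half-planes (pages) meeting at the spine, and let $E_j$ be the set of edges drawn inside page $P_j$. Since every edge lies within a single page, $E = E_1 \cup E_2 \cup E_3$ is a partition. I will take the two blocks to be $E_1 \cup E_2$ and $E_3$, and show that both can be drawn crossing-free with a \emph{common} placement of the vertices. That is exactly what thickness at most two requires.

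The first step is to observe that the vertices lie on the spine. This is implicit in the book-embedding setup, because edges from different pages meet only along the spine.

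Next, I unfold pages $P_1$ and $P_2$ into a single plane. The two half-planes glued along the spine line form a surface homeomorphic to the plane. Concretely, I reflect $P_2$ so that it becomes the half-plane on the opposite side of the spine from $P_1$. The edges of $E_1$ stay in one open half-plane and the edges of $E_2$ in the other, so no edge of $E_1$ meets an edge of $E_2$ except possibly at shared endpoints on the spine. Within each page there were no crossings to begin with. So $E_1 \cup E_2$ is drawn in the plane without crossings, with the vertices placed on a line.

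Then I handle $P_3$. I take the same line with the same vertex positions and draw $E_3$ in one of the half-planes, copying its drawing from $P_3$ through a homeomorphism of $P_3$ onto that half-plane that fixes the spine pointwise. This drawing has no crossings. Because the vertex positions are identical in both drawings, we have one vertex placement together with two planar edge blocks, so the thickness is at most $2$.

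For the matching lower bound, I note that the statement's ``has thickness two'' should be read as ``at most two.'' For the circuits of interest the graph is genuinely non-planar, since otherwise the $\acc$ bound would apply; so I will present the result as an upper bound. I expect no real obstacle in this argument. The only point needing care is justifying that vertices sit on the spine, and that the gluing of two pages is planar even when an edge touches the spine only at its endpoints. I would handle the latter by noting that the interior of each edge lies in the open half-plane.
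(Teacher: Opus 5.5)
Your decomposition is the same as the paper's: pages 1 and 2 unfold into one plane block, and page 3 is the second block. However, your final step depends on the claim that ``the vertices lie on the spine,'' and that claim is false in this paper's setting.

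In the three-page model used here, gates sit inside the pages, not just on the spine. In the simulation of stacks, only the register lies on the spine. Each stack cell is a gate placed on its page at a distance from the spine given by its depth. The paper's own proof has to perturb so that ``no two different vertices in the third page and first page coincide,'' which only makes sense because vertices lie in page interiors. Your justification does not establish the claim either. A vertex in the interior of a page simply has all its incident edges in that page, which is fully consistent with edges from different pages meeting only along the spine. So the sentence ``the vertex positions are identical in both drawings'' fails as written. When you copy $P_3$ into a half-plane, its interior vertices land where the drawing of $E_1\cup E_2$ has its own vertices and edges, possibly at the same points. Meanwhile, the interior vertices of $P_1$ and $P_2$ have no positions compatible with your drawing of $E_3$.

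The repair is short and is what the paper does. A vertex in the interior of $P_3$ is isolated in the block $E_1\cup E_2$, and a vertex in the interior of $P_1$ or $P_2$ is isolated in the block $E_3$. Map $P_3$ onto the half-plane occupied by $P_1$; the paper rotates the third page about the spine onto the first. Then apply a small perturbation of the $P_3$ drawing, fixing the spine pointwise, so that no two vertices coincide and no isolated vertex of one block lies on an edge of the other. One common vertex placement then carries two crossing-free edge blocks.

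Alternatively, you can avoid the common-placement issue by using the standard definition of thickness as the minimum number of planar subgraphs whose union is $G$. That definition is equivalent, because a planar graph can be drawn crossing-free with its vertices at any prescribed distinct points. With it, your two planarity observations already finish the proof.

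Your reading of ``has thickness two'' as ``at most two'' matches what the paper proves. Your side remark about non-planarity is not needed.
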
\begin{proof}
Consider a graph embedded on three pages. Let us rotate the third page around the spine until it coincides with the first page. By doing small perturbations we can assume that no two different vertices in the third page and first page coincide. Now, let us color the edges on the first and second page in red, while the edges on the third page in black. By definition, the red edges do not cross and the black edges do not cross. Therefore, the thickness of the graph is at most $2$.
\end{proof}

\begin{figure} 
\centerline{\includegraphics[width=7cm]{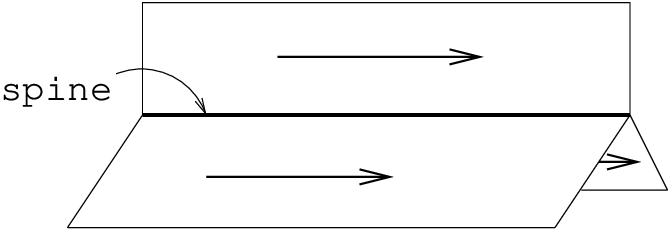}}
\caption{A circuit on three pages}
\label{threepages} 
\end{figure}

Now the question arises as to what 
happens when we allow more than two pages in our
circuit.  Define $k${\em Pages} to be the class of languages 
captured by constant-width polynomial-size circuits on $k$ pages. Recall that we allow a gate to be an OR, an AND, or an input literal, and there can be multiple occurrences of the same input literal. Theorem~\ref{thm:bnc1} implies
$k${\em Pages} $\subseteq \nc$ for every $k$.   

We show that three pages suffice:

\begin{theorem}\label{ncpages}
$\nc = 3${\it Pages}.
\end{theorem}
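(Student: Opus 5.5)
The inclusion $3${\it Pages} $\subseteq \nc$ has already been noted as a consequence of Theorem~\ref{thm:bnc1}, so the work is in the reverse inclusion. My plan is to start from a constant-width polynomial-size circuit with AND, OR and NOT gates, as given by Theorem~\ref{thm:bnc1}, and convert it into a monotone layered circuit of constant width whose only negations are at input literals. I will then embed this circuit on three pages so that each page is upward planar and every wire lies in a single page.

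\textbf{Eliminating NOT gates.} I will use dual-rail encoding: every gate $g$ is replaced by a pair $(g, \bar g)$. An AND gate becomes an AND together with an OR of the complemented rails (De Morgan). A NOT gate becomes simply a swap of the two rails, and a literal $x_i$ is paired with the literal $\neg x_i$. This at most doubles the width, keeps the circuit layered and of polynomial size, and uses only AND, OR and input literals.

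\textbf{Reducing to permutations between layers.} Wiring between consecutive layers in a width-$w$ circuit is an arbitrary bipartite connection pattern. I will make it ``almost'' monotone in the geometric sense, so that crossings come only from permutations of the wires. To do this, I will insert identity layers (gates computing $\mathrm{OR}(y)$ or $\mathrm{AND}(y,y)$) and express any layer transition as: a fan-out step that duplicates values in sorted order, which is upward planar; then a permutation of $O(1)$ wires; then a local gate step in which each gate reads adjacent wires, which is also upward planar. Every permutation factors into adjacent transpositions, and I will realize each of these as a separate block of layers. This keeps constant width and only constant blowup per original layer.

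\textbf{The key step: one transposition on three pages.} I need to swap two adjacent wires $a,b$ with all other wires passing straight through. The plan is to place the vertex order along the spine (each layer a set of points on the spine, layers arranged left to right). Non-crossing straight edges go on page~1. For the swap, I will route the edge $a\to a'$ on page~2 and $b\to b'$ on page~3, since two crossing edges on distinct pages do not intersect. Upward planarity within each page must hold globally, not just per block. Because successive transpositions reuse pages 2 and 3, I must check that edges placed on the same page from different blocks never cross. They lie between different pairs of consecutive layers, so in the left-to-right layered drawing they occupy disjoint vertical strips and cannot cross. Each page is therefore upward planar, and every wire lies in one page.

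I expect this step, making the three-page embedding precise, to be the main obstacle. The difficulty is ensuring that edges from page 1 sharing a strip with a swapped edge do not violate the embedding conventions: vertices lie on the spine, and each page must be an upward planar drawing of its own edge set. If the model instead requires vertices to lie in page interiors, I would first try to simulate a crossing as two wires passing into different pages around the spine. As a fallback, I would build a planar-with-one-crossing gadget in which one wire is lifted onto page 3 for a single layer. Once this is settled, the construction gives polynomial size and constant width, establishing $\nc \subseteq 3${\it Pages}.
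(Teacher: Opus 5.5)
Your first two steps are fine. The inclusion $3${\it Pages} $\subseteq \nc$ is immediate, and dual-rail encoding does turn the circuits of Theorem~\ref{thm:bnc1} into monotone constant-width circuits with literals at the leaves. The gap is in the key step, the transposition gadget, and it comes from a picture of the three-page model that is not coherent.

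In the model, each page is upward planar: layers are columns laid out along the spine, so each layer meets the spine in exactly one point. Consequences:
\begin{itemize}
\item At most one vertex per layer lies on the spine. Every other vertex lies in the interior of a page, and an edge between two interior vertices of page~1 must itself lie in page~1.
\item Your gadget puts each layer on the spine and sends $a\to a'$ to page~2 and $b\to b'$ to page~3. For interior vertices this is forbidden, and at most one of $a,b$ (and one of $a',b'$) can be the spine vertex.
\end{itemize}
If you instead mean a book embedding with all vertices on the spine, your other claims fail:
\begin{itemize}
\item Straight edges between spine points lie inside the spine.
\item For arcs in a page, the non-crossing condition is nesting, not disjoint vertical strips. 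Suppose layers $i$ and $i+1$ occupy spine positions $1,\dots,w$ and $w+1,\dots,2w$ in the same order. Then the identity edges $1\to w+1$ and $2\to w+2$ interleave, so they cross on any single page.
\item Likewise, an edge from layer $i$ into position $w+2$ interleaves with an edge from position $w+1$ into layer $i+2$. So edges from different blocks that reuse pages~2 and~3 do collide.
\end{itemize}

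The missing idea is that the spine is a single shared one-bit register per layer, and each page behaves like a stack whose top is nearest the spine. Wires on one page can never cross, so a value can reach the spine, and hence another page, only after the values between it and the spine have been moved away. A crossing must therefore be simulated over several layers:
\begin{itemize}
\item first move aside, one at a time through the spine, the wires lying between the pair and the spine;
\item then move the two values through the spine onto the other pages, using spare constant capacity there;
\item then push them back in the opposite order, and restore the saved wires.
\end{itemize}
This is exactly what the paper formalizes. Proposition~\ref{stackspages} embeds \emph{stacks}$(O(1),O(1),O(1))$ computations on three pages. Lemma~\ref{stacksaux} then implements the transpositions of Barrington's $S_5$ programs by such pushes and pops, with heights $(3,3,2)$ because only five bits are ever live.

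Your dual-rail plan could be repaired by realizing each adjacent transposition, and each gate evaluation, through such stack manipulations, at the cost of larger constant heights. As written, however, the transposition gadget does not work. Your fallbacks (routing wires around the spine, or lifting one wire onto page~3 for a layer) do not address the one-vertex-per-layer bottleneck at the spine, nor the fact that a lifted wire must first be topmost on its page.
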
 

In order to present our characterization of $\nc$ in terms of
constant-width circuits on three pages, it is useful to
define a simple nonuniform model of computation:
 
\begin{definition}
Define {\em stacks}$(a,b,c)$ to be the class of languages
accepted by machines with three pushdown stores (with heights bounded
by $a$, $b$, and $c$, respectively) and a computation register.
Only binary values can be stored on the stacks.
The program for the machine consists of a sequence of instructions
(one instruction for each time step), from the following:
\begin{enumerate}
\item push the register value into a stack;
\item pop the topmost entry from a stack into the register; 
\item copy the topmost entry from a stack into the register (without removing it from the stack);
\item discard the topmost entry of a stack; 
\item compute the $\vee$ or $\wedge$ of the topmost entries of two 
stacks and store it in the register; or
\item compute the $\vee$ or $\wedge$ of the topmost entry of a stack
with an input literal and store it in the register.
\end{enumerate}
Notice that the machine is oblivious in that the stack(s) and literal 
corresponding to an instruction are independent of the input.
The output of the machine is the final value that is stored in the register,
and we restrict the running time to be polynomial in the length of
the input.
\end{definition}

See Figure~\ref{canonical} which illustrates stacks(3,3,2).  Our main
simulation using this model (in Lemma~\ref{ncstacks}) involves
permuting five values stored in the stacks as displayed in 
Figure~\ref{canonical}.  Manipulating these values will be accomplished
using stack heights (3,3,2).

\begin{figure} 
\centerline{\includegraphics[width=4cm, angle=270]{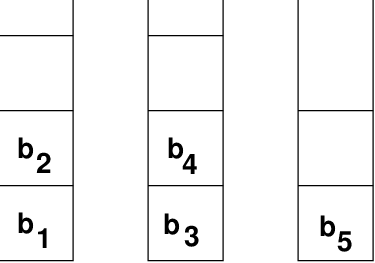}}
\caption{Canonical placement of bits in a stacks(3,3,2) computation}
\label{canonical} 
\end{figure}

Let us show that oblivious computation with 3 stacks can be simulated by 
computation with 3 pages. The height of the stacks corresponds to the width 
of the pages with the spine serving as the register.  The stack is placed on the page so that its top element is closer to the spine than its bottom element. Notice that popping a bit 
corresponds to copying it toward the 
spine while pushing is the reverse (note that this can be accomplished by using single-variable AND gate). An operation such as 
the $\vee$ of the topmost
bits of two stacks can be simulated by bringing the corresponding bits toward 
the spine where the $\vee$ operation is performed.  After an operation is performed we copy the contents of the stacks one layer down and proceed with the next operations.  See the illustration Figure~\ref{fig:stacks3pages} for details.

\begin{figure}[htbp]
\centerline{\includegraphics[width=9cm]{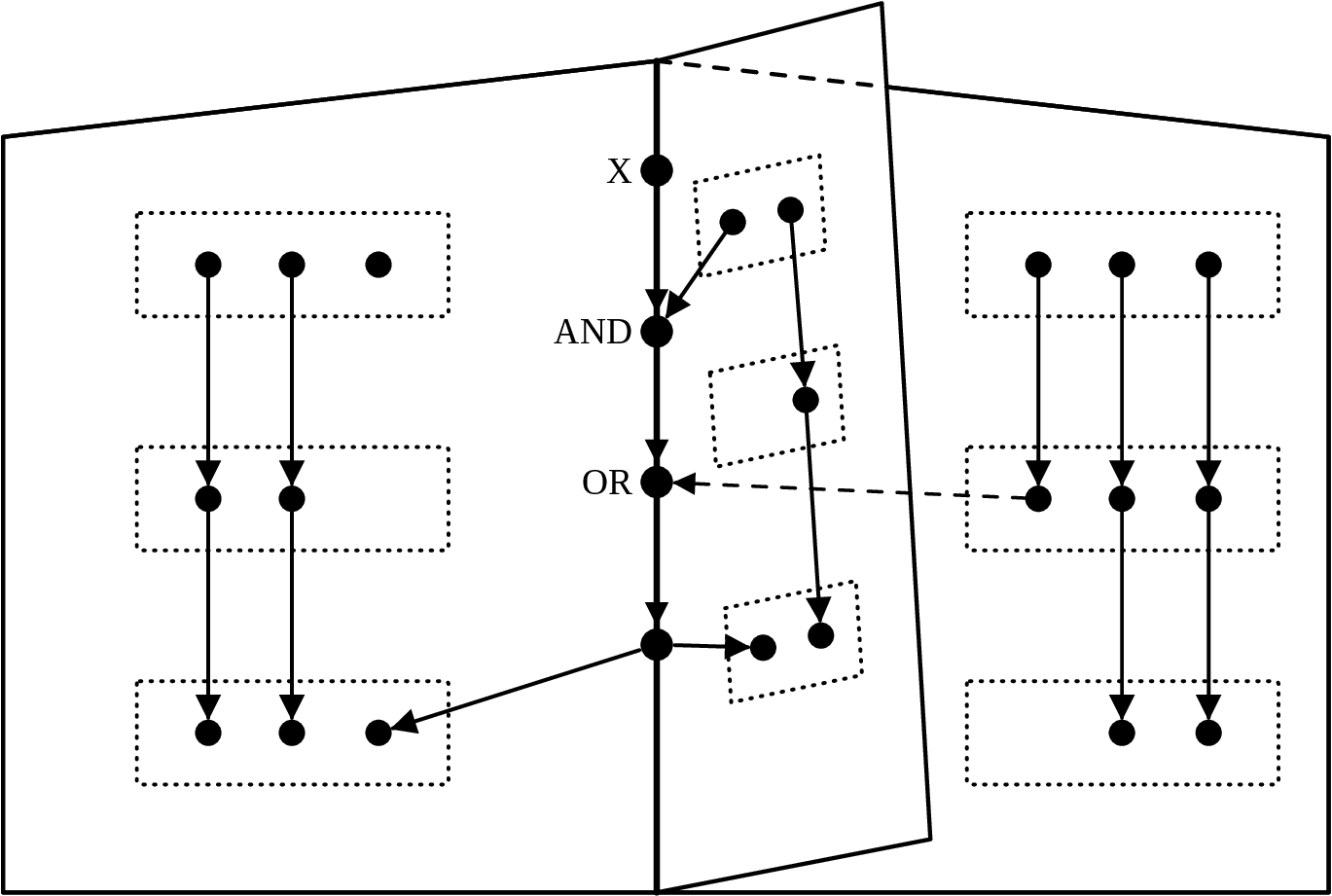}}
  \caption{An illustration of how  stacks$(O(1),O(1),O(1))$ are realized on $3$ pages.}
\label{fig:stacks3pages}
\end{figure}

Therefore we have the 
following proposition:
\begin{proposition}\label{stackspages}
{\em stacks}$(O(1),O(1),O(1)) \subseteq 3${\it Pages}.
\end{proposition}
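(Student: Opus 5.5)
We prove the proposition by a direct, time-step-by-time-step simulation. Fix a stacks$(a,b,c)$ machine with $a,b,c=O(1)$ running for $T=n^{O(1)}$ steps. We build a layered circuit on three pages with constant width per page. Page $j$ is devoted to stack $j$; the spine carries a single "register" vertex in each layer. The invariant is as follows: at the block of layers simulating step $t$, page $j$ contains at most $h_j$ live vertices (where $h_j\in\{a,b,c\}$ is the height bound). They are arranged in order of distance from the spine, with the top of stack $j$ nearest the spine, and each carries the value of the corresponding stack cell after $t$ steps. Since the machine is oblivious, the current height of every stack at every time is input-independent, so the layout of each block is determined in advance.

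The key steps, in order, are these. First, there is a "copy forward" gadget. Each live vertex at distance $d$ from the spine in one block feeds a single-input AND gate at distance $d$ in the next block, using horizontal straight edges on its own page. These edges are parallel, so they cause no crossing and keep every page upward planar. Second, there is one gadget per instruction type, each realized in a constant number of layers.

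\begin{itemize}
\item Push: the register vertex on the spine becomes the new nearest-to-spine cell. The existing cells shift one position outward via edges from distance $d$ to $d+1$, which remain mutually parallel.
\item Pop and discard: the cells shift inward. For pop, the top cell's edge goes to the spine register; for discard, the top cell is simply not continued.
\item Copy: like pop, but the top cell is also continued on its page.
\item The operation $\vee/\wedge$ of the tops of two stacks: the two top cells, on pages $j$ and $k$, each send an edge to the same spine gate labeled $\vee$ or $\wedge$. Both edges come from the vertex nearest the spine on their own page, so each lies entirely within one page and crosses nothing there.
\item The operation with an input literal: place the literal gate on the spine (or adjacent to it on a page, in a position nearest the spine), with the top cell and the literal feeding the spine gate.
\end{itemize}

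Third, we check the embedding constraints. Every edge joins consecutive layers and lies within a single page, the spine included as the common boundary. On each page the edges in a block either go straight across or shift all cells uniformly by one position, so a straight-line drawing between consecutive columns has no crossings. The width is at most $a+b+c+O(1)=O(1)$, and the size is $O(T)$. The output is the final register vertex.

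The main obstacle is ensuring that the gadgets that move data toward the spine never create crossings. This happens when a stack is shifted while its top cell is simultaneously routed to the spine, and an input literal must also be placed there. I would handle this by splitting each instruction into a constant number of sub-layers: first move the relevant top cells and the literal to the spine, then perform the gate, then shift. Each sub-layer is then a monotone, order-preserving map on each page, so upward planarity holds.
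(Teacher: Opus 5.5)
Your proposal is correct and follows the paper's simulation. Each page holds one stack with its top nearest the spine, the spine carries the register, single-input AND gates copy or shift cells between consecutive layers, and the binary and literal operations are done at a spine gate fed by the top cells. You check the per-page upward planarity (order-preserving maps between consecutive columns) more explicitly than the paper, which relies on a figure, but the construction is the same.
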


We have already observed that one inclusion is trivial:
\begin{fact}\label{pagesnc}
$3${\em Pages} $\subseteq \nc$.
\end{fact}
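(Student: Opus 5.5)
The plan is to appeal directly to Theorem~\ref{thm:bnc1}: every $3${\em Pages} circuit family is a constant-width, polynomial-size circuit family with AND and OR gates, so Barrington's theorem places it in $\nc$. The only point that needs checking is that the two models agree on how inputs are handled. In Theorem~\ref{thm:bnc1} inputs are presumably plain variables, while in our circuits the gates may be input literals $x_i$ or $\neg x_i$ placed at arbitrary layers, possibly with repetitions.

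The first step is to remove the literals. I would replace each occurrence of $\neg x_i$ by a NOT gate fed from an occurrence of $x_i$. To keep the circuit layered, that occurrence of $x_i$ can be placed at a fresh earlier layer, or the two can be modeled as a single gate with a NOT at the leaf. Since inputs are allowed at any level in the general model, this changes the width by at most a constant factor and the size by at most a polynomial factor.

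The second step is to forget the embedding. Membership in $3${\em Pages} imposes extra topological restrictions, namely the three-page embedding with upward-planar pages. Dropping those restrictions leaves a layered circuit of width $O(1)$ and polynomial size over AND, OR and NOT, which is exactly the hypothesis of Theorem~\ref{thm:bnc1}. The inclusion $3${\em Pages} $\subseteq \nc$ follows.

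If one prefers not to rely on the exact statement of Barrington's theorem, there is a direct alternative. A width-$w$ layered circuit can be evaluated layer by layer. The map from the values at layer $j$ to the values at layer $j+1$ is a function $\{0,1\}^w \to \{0,1\}^w$ that depends on $O(1)$ input bits. Composing polynomially many such functions from a finite monoid is done by a balanced binary tree of compositions, each composition being a constant-size circuit, which gives depth $O(\log n)$ and fan-in $2$, hence $\nc$. The only mild care point, and hardly an obstacle, is the literal-handling in the first step; everything else is routine.
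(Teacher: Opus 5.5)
Your proposal is correct and follows the paper's route: the paper treats this inclusion as trivial, since every circuit on three pages is in particular a constant-width polynomial-size circuit, so Theorem~\ref{thm:bnc1} applies directly. Your handling of negated literals and your balanced-composition argument (the standard proof of the easy direction of Barrington's theorem) are sound; they only spell out details the paper leaves implicit.
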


Thus the following lemma will show that $\nc$, 3{\it Pages} and
stacks(3,3,2) all coincide.

\begin{figure} 	
	\begin{center}
			\subfigure{\includegraphics[width=9cm,angle=270]{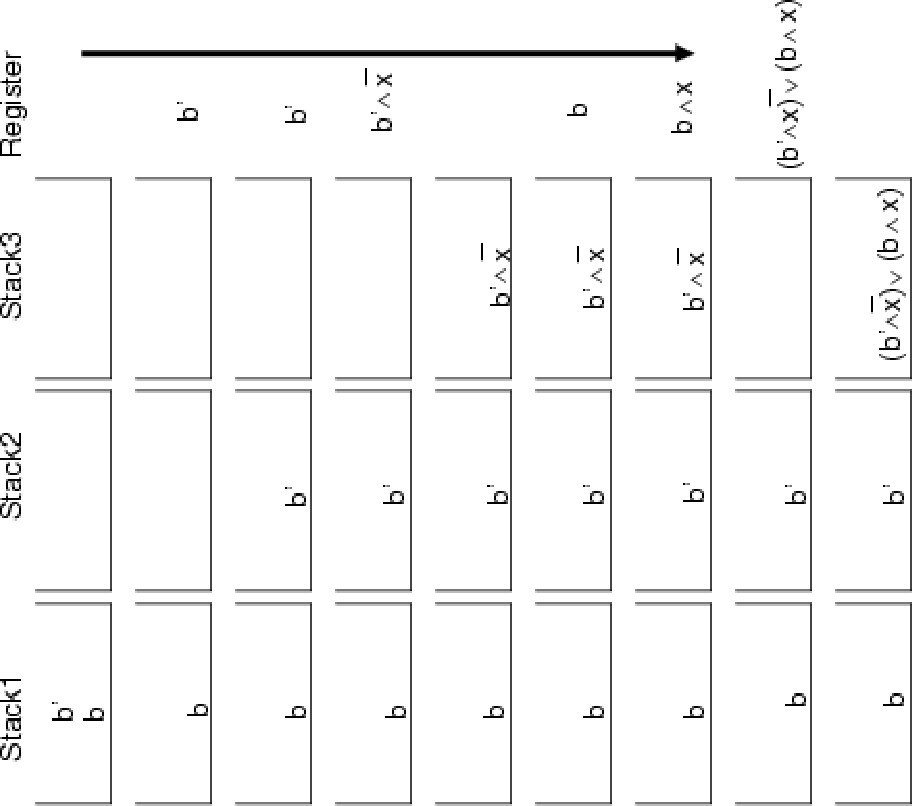}}
			\subfigure{\includegraphics[width=8.7cm,angle=270]{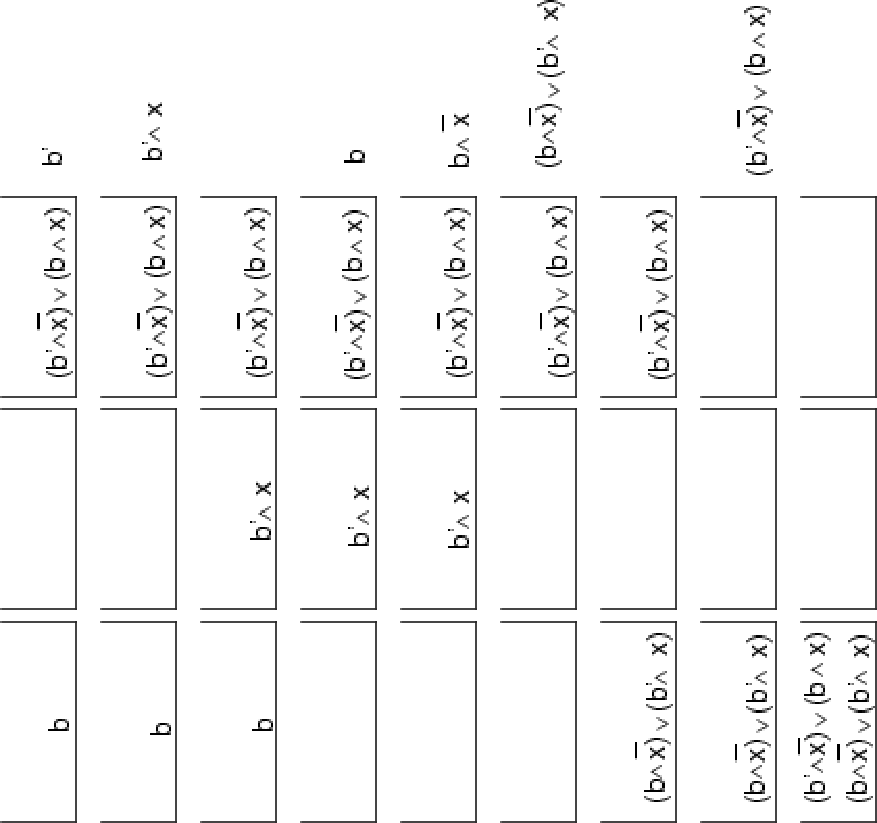}}
	\caption{\label{figstacks}A transposition (if input bit $x$ = 1) and identity 
transformation (if $x$ = 0)}
	\end{center}
\end{figure}

\begin{lemma}\label{ncstacks}
$\nc \subseteq$ {\em stacks}$(3,3,2)$.
\end{lemma}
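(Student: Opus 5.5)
We plan to derive the lemma from Barrington's theorem in its permutation-branching-program form: every language in $\nc$ is recognized by a polynomial-length width-5 permutation branching program over $S_5$. Each instruction reads one input bit $x_j$ and applies one of two fixed permutations of $\{1,\ldots,5\}$ depending on that bit. Acceptance is detected by checking whether the product equals a fixed 5-cycle $\sigma$ or the identity.

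The simulation keeps the computation in a ``one-hot'' encoding. We track five Boolean values $b_1,\ldots,b_5$ stored in the canonical placement of Figure~\ref{canonical}: three in the first stack, three in the second, and two in the third. That is eight cells, so five hold data and three serve as scratch. Initially we let $b_k=1$ exactly when the starting point $1$ is mapped to position $k$; equivalently we track where the token $1$ travels. At the end, $x$ is accepted iff the token sits at position $\sigma(1)$, and reading that one bit into the register gives the output. The five initial constants can be created by pushing $x_1\wedge\neg x_1$ and $x_1\vee\neg x_1$ obtained via instruction type 6, after first seeding a stack from an input literal. Alternatively we can fix the initial constants in the nonuniform program.

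Every permutation of $S_5$ is a product of transpositions, and an instruction ``apply $\pi_0$ if $x_j=0$, else $\pi_1$'' can be rewritten as: apply $\pi_0$ unconditionally, then apply $\pi_0^{-1}\pi_1$ conditioned on $x_j$. Unconditional permutations can be handled purely by relabeling in the nonuniform program, so they cost nothing. It therefore suffices to implement, for each transposition $(a\,b)$ and literal $x$, a conditional swap of $b_a$ and $b_b$ when $x=1$. This is Figure~\ref{figstacks}. Since $\pi_0^{-1}\pi_1$ may be a product of several transpositions, each conditioned on the same literal, we do them one at a time.

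The conditional swap uses the one-hot/monotone trick. The new values are
$$b_a'=(b_a\wedge\neg x)\vee(b_b\wedge x),\qquad b_b'=(b_b\wedge\neg x)\vee(b_a\wedge x).$$
Both are monotone in the $b$'s and use only literals, which is exactly what instructions 5 and 6 allow. The computation proceeds in three steps:
\begin{itemize}
\item Use stack operations (pops, pushes, copies, discards) to rotate the needed values $b_a,b_b$ to the tops of stacks.
\item Compute $b_a\wedge\neg x$ and push it, then compute $b_b\wedge x$, and OR the two tops.
\item Do the same for $b_b'$, discarding the old copies.
\end{itemize}
Together with the free relabeling this covers every transposition.

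The main obstacle is the bookkeeping: showing that with heights $(3,3,2)$ any chosen pair among the five values can be brought to two stack tops, with room for the intermediate products, and then returned to a (possibly relabeled) canonical placement. A height-3 stack can only expose its top, so buried values must be moved through the register into other stacks. We would do a case analysis on the positions of $a$ and $b$: both in one stack, in different stacks, top versus buried. For each case we exhibit an explicit instruction sequence, exploiting that 8 cells hold only 5 values, so 3 free slots suffice for temporaries such as $b_b\wedge x$. Relabeling means we never need to restore the exact original order, only some placement with at most $(3,3,2)$ occupancy. The running time is polynomial, since each Barrington step costs $O(1)$ instructions.
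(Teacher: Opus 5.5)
Your reduction matches the paper's: Barrington's $S_5$ programs, a one-hot encoding of the state, each step reduced to conditional transpositions, and each transposition realized by the monotone formula $(b_a\wedge\neg x)\vee(b_b\wedge x)$. Relabeling unconditional permutations is a harmless variant. The genuine gap is that the only nontrivial content of the lemma is the specific bound $(3,3,2)$; inclusion in stacks$(O(1),O(1),O(1))$ is easy. That bound is exactly the part you defer (``we would do a case analysis \ldots''). Your justification, that eight cells hold five values so three free slots suffice, fails under stack discipline, because what matters is where the free cells sit relative to the operands.

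For example, your invariant (``any placement with occupancy at most $(3,3,2)$'') allows stack 1 $=(b_a)$, stack 2 full with the other three bits, and stack 3 $=(b_b)$. Your steps then push $b_a\wedge\neg x$ onto stack 1 and $b_b\wedge x$ onto stack 3, OR them, and discard them, leaving $b_a'$ in the register. Now any push lands directly on $b_a$ or on $b_b$. Both are still needed on two distinct stack tops to form $b_b'=(b_b\wedge\neg x)\vee(b_a\wedge x)$, so parking $b_a'$ buries one of them, and unburying it requires parking $b_a'$ again. You also misread Figure~\ref{canonical}: the five data bits sit at heights $(2,2,1)$, and $(3,3,2)$ are the capacities.

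The paper closes this with a fixed invariant plus explicit sequences. The bits always sit at heights $(2,2,1)$, so each stack has exactly one free cell on top. Every transposition is conjugated into Type 1 (tops of two different stacks) or Type 2 (the two bits of stack 1 or of stack 2).

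Type 2 on a stack holding $(b',b)$ computes the two new values in succession and removes $b,b'$ during the second computation. One concrete realization:
\begin{enumerate}
\item Move $b$ onto a neighbouring stack, so that $b'$ and $b$ are both exposed.
\item Push $x\wedge b'$ onto the third stack and $\bar x\wedge b$ onto the first, and OR them.
\item Park the result in the third stack's freed cell, which holds neither operand.
\item Form the second value by replacing $b'$ and $b$ in place with their remaining halves; each original has exactly one use left.
\end{enumerate}
This costs one extra cell in each of the other two stacks, giving heights $(2,3,2)$ or $(3,2,2)$, and the Type 1 cases fit within $(3,3,2)$ as well.

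What your sketch is missing is precisely a guaranteed free cell, outside the operands' stacks, in which to park the first new value. You need either an invariant that ensures it, or a proof that rebalancing (moving bits out of the third stack) is always possible within the bounds. Then you must actually exhibit the instruction sequences.
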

 
\begin{proof}
Consider Barrington's proof that languages in \nc\ can be computed by 
permutation branching programs over $S_5$ \cite{BarringtonNC1.89}. 
The state of the branching program at some stage of the computation
can be represented as a sequence of five bits ($b_1, b_2, b_3, b_4, b_5$), 
where $b_j = 1$ if the execution of the branching program has led from
the start state to state $j$; since the branching program in Barrington's
simulation is deterministic, exactly one of the bits $b_j$ will be set to 1
at any stage.  We represent the initial state by (1,0,0,0,0).
A permutation branching program consists of a sequence of instructions
$(x_i,\theta_0,\theta_1)$, with the interpretation that if $x_i$ is equal
to 0, then the representation of the state of the program after the
next step is obtained by permuting the five bits ($b_1, b_2, b_3, b_4, b_5$)
according to permutation $\theta_0$, whereas permutation $\theta_1$ is used
if $x_i = 1$.  Without loss of generality, we may assume that one of 
$\{\theta_0,\theta_1\}$ is the identity permutation $\iota$ (because 
instruction $(x_i,\theta_0,\theta_1)$ can be simulated by
$(x_i,\theta_0,\iota)(x_i,\iota,\theta_1)$.  We may further assume that each
$\theta_b$ is a transposition, since each permutation can be represented
as a product of transpositions.  To determine if an input word is accepted,
it suffices to check if $b_1 = 1$ after the last instruction is executed.

The idea of the proof is that we put the five bits in the three
stacks using heights
$2,2,1$ respectively in a canonical way, as illustrated in 
Figure~\ref{canonical}.  Each instruction is of the form $(x_i,\tau,\pi)$
where one of $\{\tau,\pi\}$ is a transposition and the other is $\iota$.
We focus attention on two types of instructions $(x_i,\tau,\pi)$:
\begin{description}
\item[Type 1] The transposition swaps the bits that appear on the tops of
two different stacks.
\item[Type 2] The transposition swaps the bits that appear in
one stack.
\end{description}
Note that every instruction can be simulated by a sequence of instructions
of Type 1 and Type 2.  (For instance, in Figure~\ref{canonical}, to swap bits $b_1$ and $b_3$, we first
use Type 2 instructions to invert stacks 1 and 2, then use a Type 1 instruction
on stacks 1 and 2, and then again invert stacks 1 and 2.)

\begin{lemma}\label{stacksaux}
\begin{enumerate}
\item\label{swapadj} A Type 1 instruction on two stacks
can be accomplished by using exactly one extra place in the other stack.
\item\label{swaptop} A Type 2 instruction on one stack can be accomplished
by using exactly one more place in each of the other two stacks.
\end{enumerate} 
\end{lemma}
\begin{proof} Figure~\ref{figstacks} illustrates the proof of 
Lemma~\ref{stacksaux}.\ref{swaptop}.
A Type 2 instruction decides whether to interchange two bits
$b,b'$ or leave them alone, depending on the value of the literal
$x$.  This corresponds to replacing the sequence $(b',b)$ on one
stack, with the sequence
$((\bar{x}\wedge b) \vee (x \wedge b')$, $(\bar{x}\wedge b') \vee (x \wedge b))$.
This is implemented by computing each of the two expressions in
succession and (during the computation of the second) removing the bits $b,b'$.

The proof of Lemma~\ref{stacksaux}.\ref{swapadj} can be accomplished by
starting with the third step in Figure~\ref{figstacks} (and changing the last
step, so that it moves the contents of the register to Stack 2).
\end{proof}

Lemma~\ref{ncstacks} now follows immediately from Lemma \ref{stacksaux}, since
an operation of Type 2 on the first stack requires heights $2,3,2$, and
an operation of Type 2 on the second stack requires heights $3,2,2$,
for a maximum of $3,3,2$.  Any other operation requires less overhead.
\end{proof}

\Section{Discussion}
Our initial goal of finding a characterization of $\tc$ in terms
of constant width circuits remains a challenge for future work.
It is worth mentioning some approaches that seem not to work.  
The results in this paper seem to rule out characterizations in terms
of crossing number, genus, or thickness.
Algebraic approaches to circuit complexity tend to mimic the structure
of regular sets, and it is known that any regular set that is not complete
for $\nc$ lies inside $\acc$ \cite{BarringtonNC1.89}, \cite{Barr.Therien.88};
thus this avenue does not seem promising when searching for a characterization
of $\tc$.
One might attempt to follow the approach of \cite{aad} by considering
arithmetizations of Boolean circuits.  However, a width-two planar branching
program is presented in \cite{aabdl} for which the problem of counting the
number of accepting paths is hard for $\nc$ under $\acc$ reductions; this
rules out many approaches that one might try in searching for a 
characterization of $\tc$.  On the positive side, a characterization of
$\shac$ (and hence of $\tc$) in terms of counting paths in a restricted
class of branching programs is presented in \cite{aabdl}.  This 
characterization has been extended, to give a characterization of
arithmetic $\nc$ in terms of a class of log-width planar branching programs
\cite{mahajan.rao}.

We find it somewhat intriguing that the graph-theoretic notion of
genus is linked (via Theorem \ref{mainthm} and \cite{Barr.Therien.88})
to the algebraic notion of solvability.  It would be interesting to know
if there are deeper reasons for this linkage.

In parting, we should also mention a result of Hansen \cite{Hansen.ccc.08}
where he proves that {\em quasipolynomial} size constant width nondeterministic
branching programs exactly capture {\em quasipolynomial} size $\acc$. Hansen's result immediately 
implies that quasipolynomial size $\acc$ may either be characterized by quasipolynomial
size polylog genus constant width circuits or by polylog genus constant width (nondeterministic)
branching programs.  All of these characterizations hold in the nonuniform
setting.  A number of obstacles would need to be overcome, before a 
corresponding result can be proven for uniform circuit complexity.

\Section{Acknowledgments}
We thank Robin Thomas, Dan Archdeacon, Bojan Mohar, Bill Steiger, 
Meena Mahajan,
Kasturi Varadarajan, and Carsten Thomassen for their generous help
in answering our questions about graph genus.
A preliminary version of this research (with an erroneous proof of
Theorem \ref{mainthm})
was presented at the 
20th Annual IEEE Conference on Computational Complexity (CCC), in 2005.

\bibliographystyle{latex8}
\bibliography{submission}
\end{document}